\theoremstyle{plain}
\theoremstyle{plain}
\def\bea{\begin{eqnarray}}
\def\eea{\end{eqnarray}}
\def\ba{\begin{array}}
\def\ea{\end{array}}
\def\beq{\begin{equation}}
\def\eeq{\end{equation}}
\def\({\left(}
\def\){\right)}
\def\[{\left[}
\def\]{\right]}
\newcommand{\mc}[1]{\mathcal{#1}}
\newtheorem{theorem}{Theorem}
\newtheorem{definition}{Definition}
\newtheorem{lemma}{Lemma}
\newtheorem{observation}{Observation}
\begin{document}

\title{Preservation of entanglement in local noisy channels}
\author{Priya Ghosh}
\affiliation{Harish-Chandra Research Institute,  A CI of Homi Bhabha National Institute, Chhatnag Road, Jhunsi, Allahabad 211 019, India}

\author{Kornikar Sen}
\affiliation{Harish-Chandra Research Institute,  A CI of Homi Bhabha National Institute, Chhatnag Road, Jhunsi, Allahabad 211 019, India}

\author{Ujjwal Sen}
\affiliation{Harish-Chandra Research Institute,  A CI of Homi Bhabha National Institute, Chhatnag Road, Jhunsi, Allahabad 211 019, India}
\begin{abstract}
Entanglement subject to noise can not be shielded against decaying. But, in case of many noisy channels, the degradation can be partially prevented by using local unitary operations. We consider the effect of local noise on shared quantum states and evaluate the amount of entanglement that can be preserved from deterioration. The amount of saved entanglement not only depends on the strength of the channel but also on the type of the channel, and in particular, it always vanishes for the depolarizing channel. The main motive of this work is to analyze the reason behind this dependency of saved entanglement by inspecting properties of the corresponding channels. In this context, we quantify and explore the biasnesses of channels towards the different states on which they act. We postulate that all biasness measures must vanish for depolarizing channels, and subsequently introduce a few measures of biasness. We also consider the entanglement capacities of channels. We observe that the joint behaviour of the biasness quantifiers and the entanglement capacity explains the nature of saved entanglement. Furthermore, we find a pair of upper bounds on saved entanglement which are noticed to imitate the graphical nature of the latter.

\end{abstract}

\maketitle

\section{Introduction}
\label{sec1}
Entanglement~\cite{horodecki,otfried,sreetama} plays a crucial role in quantum information science. It is used as a resource in many quantum information tasks such as teleportation~\cite{ent-1}, quantum dense coding~\cite{ent-2}, quantum computation~\cite{comp}, entanglement-based quantum cryptography~\cite{crypto,ent-3}, and so on. But it is a very delicate characteristic of shared quantum systems. Realistic systems are subject to noise, and entanglement loss is typically unavoidable therein. Entanglement transformation, and in particular its disappearance, in presence of different classes of noise have been studied by various scientists~\cite{yu1,yu2,ali3,ann,yu3}. Experiments along this line are explored e.g. Refs. ~\cite{kimble,almeida,salles}.

Numerous theoretical~\cite{derkacz,plastina,nemes,petersen,rau3,ali1,ali2,sun,keane,hussain,xiao1,xiao2,liao} as well as experimental~\cite{xu,kim,lim,rau2} researches have been carried out on prevention or reduction of entanglement loss. In recent years, it has been realized that though local unitaries can not change entanglement of any state, its operation can restrain or delay entanglement degradation, even when it is applied only on a single party. If a local unitary is operated afore the system is exposed to noise, the effect of the noise on the system may get reduced compared to the case without the application of any unitary~\cite{rau1,noise3,sinha1,chaves}. This can be exemplified through analyzing multipartite graph states in presence of the local dephasing channel~\cite{noise3}, and bipartite maximally entangled states after transforming through local amplitude damping and local dephasing channels~\cite{chaves}. Since local unitary operations are easy to implement, this method of preservation of entanglement is conceivably an experimentally friendly and low cost process~\cite{laurat,marcelo}. 

Focusing on bipartite states, successful protection of entanglement from local noise acting identically on the two parties, through the help of local unitary operations, depends on the type and strength of the channels.
We name the amount of entanglement that can be saved by applying the optimal local unitary on a single party as ``saved entanglement" (SE). 
We observe that entanglement can not be 
saved by using this method in case of locally covariant channels, e.g. the local depolarizing channel, whereas it is possible to protect a finite amount of entanglement in presence of a local amplitude damping, bit flip, phase flip, or bit-phase flip channel. We numerically obtain that the saved entanglement is exactly same for local bit flip, phase flip, bit-phase flip noise. In this work, we try to explore the reason behind the disparate behavior of saved entanglement for distinct channels.

Since local unitaries just rotate the states locally, if the resulting state is more robust to a noise, the reason must be the noise's partiality towards a set of states having a particular \emph{direction}. This fact motivates us to investigate the property of ``biasness" of channels, which describes the channel's bias towards a bunch of states. We make several observations in this direction, which lead us state a postulate that must be satisfied by a quantifier of biasness of a quantum channel. We subsequently introduce three such quantifiers. We also consider the entanglement capacity of quantum channels. We show that the nature of the
SE can be explained by composing the behaviours of biasness and entanglement capacity. In particular, by considering two-qubit states and some paradigmatic noisy channels, viz. amplitude damping, bit flip, phase flip, and bit-phase flip, acting locally and identically on both of the qubits, we observe that at low noise strengths, SE monotonically increases with biasness of the local channel, whereas it decreases monotonically with entanglement capacity at higher noise strengths. We also present two upper bounds on the saved entanglement. These bounds are seen to mimick the nature of SE.


The rest of the paper is organized as follows. In Sec. \ref{sec2}, we briefly recapitulate definitions of some well-known quantities, which will be needed in the rest of the paper, such as covariant channels, $l_1$-norm, distance between two channels, concurrence, etc. Saved entanglement and entanglement capacity are defined in Sec. \ref{sec3}. To find our way towards defining eligible measures of biasness, we make a few observations about saved entanglement of channels in the same section. Based on these observations, we define various appropriate biasness quantifiers in Sec. \ref{sec4}. We determine two bounds on the saved entanglement in Sec. \ref{sec5}. Different well-known examples of noise are considered in Sec. \ref{sec6}, and their behaviour with respect to the biasness measures as well as entanglement capacity and saved entanglement are obtained and discussed. We present the concluding remarks in Sec. \ref{sec7}.

\section{Prerequisites}
\label{sec2}
In this section we will briefly discuss some basic tools which will be used later.\\ 

Quantum channels transform a state, $\rho$, acting on a Hilbert space, $\mathcal{H}$, to another state, $\rho'$, acting on the same or different Hilbert space, $\mathcal{H}'$. Operation of a quantum channel can be described using a completely positive trace-preserving (CPTP) map, $\Lambda$, and the transformation can be denoted as $\Lambda: \rho \rightarrow \Lambda(\rho)$. Corresponding to every CPTP map there exists a set of Kraus operators, $\{K_i\}_i$,  satisfying $\sum_i K_i^\dagger K_i={I}_d$, such that the transformation $\Lambda(\rho)$ can be expressed as $\Lambda(\rho) = \sum_i K_i \rho K_i^\dagger$~\cite{chuang}. Here, ${I}_d$ is the identity operator on $\mathcal{H}$.

\textbf{Notations:} 
In general, we denote density matrices, unitaries, and noisy channels acting on the Hilbert space, $\mathcal{H}$, of dimension $d$, by $\rho$, $U$, and $\Lambda$ respectively, unless specified otherwise. The set of rank-one states, density matrices, and unitary operators on $\mathcal{H}$ are denoted as $\mathcal{P(H)}$, $\mathcal{S(H)}$, and $\mathcal{U(H)}$ respectively. In case of composite Hilbert spaces, $\mathcal{H}\otimes\mathcal{H}$, of dimension $d\times d$, we use the same notations but in bold symbols, i.e the density matrices, unitaries, and channels are expressed as $\boldsymbol{\rho}$, $\boldsymbol{U}$, and $\boldsymbol{\Lambda}$. 

\begin{definition}
[Covariant channels~\cite{dariusz,winter}]
Suppose that a quantum channel, $\Lambda$, operates on a state $\rho\in\mathcal{S(H)}$ in such a way that $\Lambda(\rho)$ also acts on the same Hilbert space $\mathcal{H}$. Then the channel $\Lambda$ is said to be covariant if the relation,
\begin{equation}
\label{cov-eqn}
  \Lambda\left(U \rho U^\dagger\right) = U \Lambda(\rho) U^\dagger , 
  \end{equation}
is true for all $\rho\in\mathcal{S(H)}$ and $U\in\mathcal{U(H)}$. 
\end{definition}

An example of covariant channels is the 
depolarizing channel~\cite{frey}, $\Lambda_{DC}$, that can be expressed by its action on a state, $\rho\in\mathcal{S(H)}$, 
viz.
\begin{equation}
  \Lambda_{DC} (\rho) = \left(1- p\right) \rho + \frac{p}{d} I_d, \label{eq-dc} 
\end{equation}
where $p$ is the strength of the depolarizing noise. 
Let us prove this statement, for completeness. Any single-qubit state can be represented as a point on or within the Bloch sphere. A unitary operator acting on the state just rotates the directed line (Bloch vector) joining the point from the center of the sphere, whereas the depolarizing channel shrinks the length of the vector keeping its direction fixed. Hence, the action of the unitary and the channel commutes with each other. Thus, a depolarizing channel acting on single-qubit states is a valid example of a covariant channel. 
For a proof that depolarizing channels acting on arbitrary dimensional states are also covariant channels, see Appendix.


\begin{definition}
[$l_1$-norm of matrix] 
The $l_1$-norm of a matrix A having $m$ rows and $n$ columns is defined as
\begin{equation}
\label{l1-norm}
    ||A||_{1} \coloneqq \max_{1 \leq j \leq n} \sum_{i=1}^{m} \abs{a_{ij}},
\end{equation}
where $a_{ij}$ denotes the element of $A$ situated at the intersection of the $i$th row and the $j$th column of $A$.
\end{definition}
Similarly, the $l_1$-norm distance between two matrices of equal order, say $A$ and $B$, can be defined as
\begin{equation}
    ||A-B||_1 \coloneqq \max_{1 \leq j \leq n} \sum_{i=1}^{m} \abs{a_{ij} - b_{ij}}, \label{eq1}
\end{equation}
where  $\{b_{ij}\}_{ij}$ are elements of $B$.

Next, we are going to define a measure of distance between two channels~\cite{dist-map,nielsen,kitaev}. In this regard, let us first recapitulate the Choi–Jamiołkowski–Kraus–Sudarshan (CJKS) isomorphism~\cite{jamio,choi,kraus,sudarshan}.  Consider a ``reference" Hilbert space $\mathcal{H}'$, having the same dimension, $d$, as of $\mathcal{H}$. A maximally entangled state acting on the composite Hilbert-space $\mathcal{H}\otimes\mathcal{H}'$, can be defined as $\ket{\boldsymbol{\phi}^+}=\frac{1}{\sqrt{d}}\sum _{0\leq i<d} \ket{ii}$. Then according to the CJKS isomorphism the map $\Lambda\rightarrow \rho_\Lambda=I_d\otimes \Lambda\left(\ket{\boldsymbol{\phi}^+}\bra{\boldsymbol{\phi}^+}\right)$ is bijective. 

\begin{definition}[Distance between two channels~\cite{dist-map}]
\label{def:dist-channel}
A measure of the distance between two channels, $\Lambda$ and $\Lambda'$, acting on same dimensional states (say $d$), can be defined using the CJKS isomorphism as
\begin{align}
\mathcal{D}(\Lambda \parallel \Lambda') = \mathcal{D}_s\left(\mathbbm{1} \otimes \Lambda \left(\ket{\phi^{+}} \bra{\phi^{+}}\right) \parallel  I_d \otimes \Lambda' \left(\ket{\phi^{+}} \bra{\phi^{+}}\right)\right), \label{dist-eqn}
\end{align}
where $\mathcal{D}_s$ represents any measure of distance between two states. For numerical calculations, we will use the $l_1$-norm as the distance measure, $\mathcal{D}_s$.
\end{definition}

Let us now move to a quantifier of entanglement. Precisely, we consider concurrence, which is an entanglement measure of bipartite quantum states of $\mathbbm{C}^2 \otimes \mathbbm{C}^2$~\cite{concurrence2,concurrence,bennett-1,bennett-2}. The concurrence ($\mathcal{C}$) of any two-qubit density operator, say $\boldsymbol{\rho}_2$, is given by
\begin{equation}
    \mathcal{C}(\boldsymbol{\rho}_2) = \max \lbrace 0, \lambda_1-\lambda_2-\lambda_3-\lambda_4 \rbrace,
\end{equation}
where $\lambda_1,\lambda_2,\lambda_3,\lambda_4$ are the eigenvalues of
$$ \omega = \sqrt{\sqrt{\boldsymbol{\rho}_2} \Tilde{\boldsymbol{\rho}}_2 \sqrt{\boldsymbol{\rho}}_2}$$
in decreasing order, with $\Tilde{\boldsymbol{\rho}}_2 = (\sigma_y \otimes \sigma_y) \boldsymbol{\rho}_2^{*} (\sigma_y \otimes \sigma_y)$.

For all the numerical calculations ahead, we will use concurrence as the measure of entanglement.



\section{Observations}
\label{sec3}
Degradation of entanglement of any composite system is inevitable when acted on by local noisy channels. Despite the incapability of local unitaries to change the entanglement of a system, action of a local unitary on an entangled state can reduce the intensity of the damage unless the system is in a two-qubit pure state and the noise acts on only a single party~\cite{konrad}. 
We first want to quantify the amount of entanglement that can be protected from the claws of noise. In this regard, we consider bipartite states, $\boldsymbol{\rho}$, acting on a composite Hilbert space, $\mathcal{H}\otimes\mathcal{H}$, and use local unitaries of the type $\boldsymbol{U}=I_d \otimes U$, to guard the entangled state from local noise, $\boldsymbol{\Lambda}=\Lambda\otimes\Lambda$. 
 \begin{definition} [Saved entanglement (SE)] 
The maximum amount of entanglement that can be saved by the help of local unitaries of the form $\boldsymbol{U}$ ($=I_d\otimes U$) from a noisy channel $\boldsymbol{\Lambda}$ ($=\Lambda\otimes\Lambda$) with a fixed noise strength can be defined as
\begin{equation}
    \textnormal{SE} \coloneqq  \max_{\boldsymbol{\rho}\in\mathcal{S} (\mathcal{H}\otimes\mathcal{H})} \left[\max_{U\in\mathcal{U(H)}} \mathcal{E}\left(\boldsymbol{\Lambda} \left(\boldsymbol{U} \boldsymbol{\rho} \boldsymbol{U}^\dagger\right)\right) - \mathcal{E}\left(\boldsymbol{\Lambda} (\boldsymbol{\rho})\right)\right] , \label{eq2}
\end{equation}
where $\mathcal{E}$ is any fixed measure of entanglement.
\end{definition}
It is straightforward from the definition that SE is a property of the local noise $\boldsymbol{\Lambda}$, or, more precisely, of the individual single-party noise, $\Lambda$. For a trivial unitary, i.e., for $U=I_d$, the quantity $\mathcal{E}\left(\boldsymbol{\Lambda} \left(\boldsymbol{U} \boldsymbol{\rho} \boldsymbol{U}^\dagger\right)\right) - \mathcal{E}\left(\boldsymbol{\Lambda} (\boldsymbol{\rho})\right)$ is zero for all $\boldsymbol{\rho}\in\mathcal{S(H\otimes H)}$. Since SE is defined as the maximum value of this quantity, maximized over all unitaries (and states), it is obvious that SE will always be greater or equal to zero.

Restricting ourselves to the composite Hilbert space of dimension $2\otimes2$ and numerically optimizing over the set of pure states only, we obtain the following results.

\begin{observation}
SE is zero for the local depolarizing channel, $\Lambda_{DC}\otimes\Lambda_{DC}$, for all considered values of the noise strength, $p$.
\end{observation}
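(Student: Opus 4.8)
The plan is to show that, for the depolarizing channel, the bracketed quantity in the definition of $\textnormal{SE}$ vanishes for \emph{every} state $\boldsymbol{\rho}$ and \emph{every} admissible local unitary $\boldsymbol{U}=I_d\otimes U$, so that the nested maximization in Eq.~\eqref{eq2} necessarily returns $0$. The whole argument rests on a single identity, namely that the local channel commutes with conjugation by a local unitary,
\begin{equation}
\boldsymbol{\Lambda}\left(\boldsymbol{U}\boldsymbol{\rho}\boldsymbol{U}^\dagger\right)=\boldsymbol{U}\,\boldsymbol{\Lambda}(\boldsymbol{\rho})\,\boldsymbol{U}^\dagger. \label{eq:covcomm}
\end{equation}
Once Eq.~\eqref{eq:covcomm} is in hand, I would invoke the fact that any entanglement measure $\mathcal{E}$ is invariant under local unitaries, whence $\mathcal{E}\left(\boldsymbol{\Lambda}(\boldsymbol{U}\boldsymbol{\rho}\boldsymbol{U}^\dagger)\right)=\mathcal{E}\left(\boldsymbol{U}\boldsymbol{\Lambda}(\boldsymbol{\rho})\boldsymbol{U}^\dagger\right)=\mathcal{E}\left(\boldsymbol{\Lambda}(\boldsymbol{\rho})\right)$. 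Every term being maximized is then identically zero, so $\textnormal{SE}=0$ for all $p$; this also explains why the numerical restriction to two-qubit pure states returns $0$.

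To establish Eq.~\eqref{eq:covcomm}, I would first upgrade the covariance relation from states to arbitrary operators. Covariance, as defined in Eq.~\eqref{cov-eqn}, is stated for $\rho\in\mathcal{S(H)}$, but since density matrices span the full operator space on $\mathcal{H}$ and both $\Lambda_{DC}$ and the conjugation $X\mapsto UXU^\dagger$ are linear, the relation $\Lambda_{DC}(UXU^\dagger)=U\Lambda_{DC}(X)U^\dagger$ holds for \emph{every} operator $X$. I would then factor the composite channel as $\boldsymbol{\Lambda}=(\Lambda_{DC}\otimes\id)\circ(\id\otimes\Lambda_{DC})$ and treat the two factors separately. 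Checking on product operators $C\otimes D$ and extending by linearity, the factor $\id\otimes\Lambda_{DC}$ commutes with conjugation by $I_d\otimes U$ by the operator form of covariance applied on the second tensor slot, while $\Lambda_{DC}\otimes\id$ commutes with it trivially because $U$ and the first-slot channel act on different subsystems. Composing the two commutations yields Eq.~\eqref{eq:covcomm}.

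The step deserving the most care, and the one I expect to be the genuine obstacle, is precisely this lifting of single-party covariance to the bipartite setting when $\boldsymbol{\rho}$ is entangled. A naive attempt at the level of Kraus operators fails: writing $\Lambda_{DC}(\cdot)=\sum_i K_i(\cdot)K_i^\dagger$, the two sides of Eq.~\eqref{eq:covcomm} reduce to $\sum_{ij}(K_i\otimes K_jU)\boldsymbol{\rho}(K_i^\dagger\otimes U^\dagger K_j^\dagger)$ and $\sum_{ij}(K_i\otimes UK_j)\boldsymbol{\rho}(K_i^\dagger\otimes K_j^\dagger U^\dagger)$, which do \emph{not} agree term by term because $K_jU\neq UK_j$ in general. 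Equality is recovered only after summing over $j$, and that summed statement is exactly covariance, $\sum_j K_jU\sigma U^\dagger K_j^\dagger=U\big(\sum_j K_j\sigma K_j^\dagger\big)U^\dagger$. I would therefore stress that the argument must be run at the level of the map rather than of individual Kraus operators, and that it uses nothing about $\Lambda_{DC}$ beyond covariance; indeed the same reasoning gives $\textnormal{SE}=0$ for every locally covariant channel, the depolarizing channel (covariant in all dimensions, as argued above and in the Appendix) being the relevant special case.
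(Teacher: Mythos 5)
Your proposal is correct and follows essentially the same route as the paper: the paper covers this statement analytically via Theorem~\ref{cov-theo} (saved entanglement vanishes for local covariant channels), whose proof likewise substitutes $\boldsymbol{\Lambda}\left(\boldsymbol{U}\boldsymbol{\rho}\boldsymbol{U}^\dagger\right)=\boldsymbol{U}\boldsymbol{\Lambda}(\boldsymbol{\rho})\boldsymbol{U}^\dagger$ and invokes local-unitary invariance of $\mathcal{E}$, combined with the Appendix lemma that the depolarizing channel is covariant. Your only departure is a welcome refinement: you explicitly justify, by extending covariance linearly to arbitrary operators and checking the commutation on product operators in the factorization $\boldsymbol{\Lambda}=(\Lambda_{DC}\otimes\id)\circ(\id\otimes\Lambda_{DC})$, the lifting of single-party covariance to entangled bipartite inputs --- a step the paper's proof asserts without argument.
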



\begin{observation}
Local amplitude damping channel~\cite{chaves} 
has non-zero SE for all noise strengths, $p$, except  $p = 0$ and $p = 1$. 
\end{observation}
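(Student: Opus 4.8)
The plan is to split the claim into two parts: that $\textnormal{SE}>0$ for every $p\in(0,1)$, and that $\textnormal{SE}=0$ at $p=0$ and $p=1$. The excerpt already notes that $\textnormal{SE}\ge 0$, since the choice $U=I_d$ makes the bracketed quantity in Eq.~\eqref{eq2} vanish and $\textnormal{SE}$ is its maximum over all states and unitaries. Consequently, proving $\textnormal{SE}>0$ requires only a single \emph{witness} --- one state $\boldsymbol{\rho}_0$ and one unitary $U_0$ for which the bracketed entanglement gain is strictly positive. The vanishing at the endpoints is instead a global statement, but at both endpoints the single-party channel degenerates, so the argument is immediate.

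For the interior I would take $\boldsymbol{\rho}_0=\ket{\phi^+}\bra{\phi^+}$ with $\ket{\phi^+}=\tfrac{1}{\sqrt{2}}(\ket{00}+\ket{11})$, and the local flip $U_0=\sigma_x$ on the second qubit, so that $\boldsymbol{U}_0\boldsymbol{\rho}_0\boldsymbol{U}_0^\dagger=\ket{\psi^+}\bra{\psi^+}$ with $\ket{\psi^+}=\tfrac{1}{\sqrt{2}}(\ket{01}+\ket{10})$. The reason to expect a gain is that amplitude damping is asymmetric between $\ket{0}$ and $\ket{1}$, draining population into $\ket{0}$, so the flip reorients the correlations away from the most vulnerable $\ket{11}$ component. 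I would then propagate both states through $\boldsymbol{\Lambda}=\Lambda\otimes\Lambda$ using the Kraus operators $K_0=\mathrm{diag}(1,\sqrt{1-p})$ and $K_1=\sqrt{p}\,\ket{0}\bra{1}$. A short computation shows that both outputs lie in the X-state family, for which the concurrence equals $2\max\{0,\,|\rho_{14}|-\sqrt{\rho_{22}\rho_{33}},\,|\rho_{23}|-\sqrt{\rho_{11}\rho_{44}}\}$ in the ordering $\ket{00},\ket{01},\ket{10},\ket{11}$. Evaluating this gives concurrence $(1-p)^2$ without the flip and $1-p$ with it, so the gain is
\begin{equation}
(1-p)-(1-p)^2=p(1-p),
\end{equation}
which is strictly positive on $(0,1)$ and therefore forces $\textnormal{SE}\ge p(1-p)>0$ there.

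For the endpoints I would argue at the level of the channel itself. At $p=0$ the single-qubit map is the identity, so $\boldsymbol{\Lambda}(\boldsymbol{U}\boldsymbol{\rho}\boldsymbol{U}^\dagger)=\boldsymbol{U}\boldsymbol{\rho}\boldsymbol{U}^\dagger$ is locally unitarily equivalent to $\boldsymbol{\rho}=\boldsymbol{\Lambda}(\boldsymbol{\rho})$; because any entanglement measure is invariant under local unitaries, the bracketed quantity vanishes for every $\boldsymbol{\rho}$ and $U$, giving $\textnormal{SE}=0$. At $p=1$ the single-qubit channel sends every state to $\ket{0}\bra{0}$, so $\boldsymbol{\Lambda}$ maps any two-qubit input to the product state $\ket{00}\bra{00}$; both terms in the bracket are then the entanglement of a product state, namely zero, and again $\textnormal{SE}=0$.

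The calculations are routine once the witness is fixed, so the genuinely creative step --- the one I expect to be the main obstacle --- is selecting it: one needs a state-unitary pair whose advantage is both analytically tractable (which is why $\ket{\phi^+}$ and $\sigma_x$ are convenient, as they keep both outputs inside the X-state family with closed-form concurrence) and present \emph{uniformly} across the whole open interval rather than at isolated values of $p$. Two caveats I would state honestly: this argument yields only the lower bound $\textnormal{SE}\ge p(1-p)$ and does not determine $\textnormal{SE}$ exactly, which would require the full double maximization in its definition; and since the Observation is phrased for optimization over pure states, it is worth noting that the witness $\boldsymbol{\rho}_0$ is itself pure and hence admissible in that restricted setting.
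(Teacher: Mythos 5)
Your proposal is correct, and it is worth noting that it is actually \emph{more} rigorous than what the paper does: the statement is an Observation obtained purely numerically, by optimizing Eq.~\eqref{eq2} over pure two-qubit states and single-qubit unitaries with a nonlinear optimizer (the evidence being the yellow curve in Fig.~\ref{fig1}), with no analytic witness given. Your computation checks out: propagating $\ket{\phi^+}$ through $\Lambda_{AD}\otimes\Lambda_{AD}$ gives an X state with $\rho_{14}=\tfrac{1-p}{2}$, $\rho_{22}=\rho_{33}=\tfrac{p(1-p)}{2}$, hence concurrence $(1-p)-p(1-p)=(1-p)^2$, while the flipped state $\ket{\psi^+}$ maps to $(1-p)\ket{\psi^+}\bra{\psi^+}+p\ket{00}\bra{00}$ with concurrence $1-p$ (since $\rho_{44}=0$ kills the $\sqrt{\rho_{11}\rho_{44}}$ term), so $\textnormal{SE}\ge p(1-p)>0$ on $(0,1)$; your endpoint arguments ($p=0$ is the identity channel, hence covariant so the bracket vanishes by local-unitary invariance of $\mathcal{E}$, and $p=1$ sends every input to $\ket{00}\bra{00}$) are also sound. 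What the two approaches buy is complementary: your witness proves strict positivity uniformly in $p$ and vanishing at the endpoints analytically, which the paper's numerics cannot, while the paper's numerical optimization determines the actual value of $\textnormal{SE}(p)$ (your $p(1-p)$ is only a lower bound, as you honestly flag, and indeed it need not coincide with the optimized curve in Fig.~\ref{fig3}). Your witness choice of $U_0=\sigma_x$ is also physically the natural one, exploiting exactly the asymmetry of amplitude damping toward $\ket{0}$ that the paper's notion of biasness is designed to capture.
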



\begin{observation}
Behaviour of SEs of local bit flip, phase flip (dephasing)~\cite{chaves}, and bit-phase flip channels are numerically found to be identical, and is again non-zero for non-zero and non-unit noise strengths, $p$. 
\end{observation}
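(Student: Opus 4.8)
The plan is to upgrade this numerical observation to an exact identity by exploiting the fact that the three channels are related by single-qubit unitary conjugations. Writing $\Lambda_{BF}(\rho) = (1-p)\rho + p\,\sigma_x\rho\sigma_x$, $\Lambda_{BPF}(\rho) = (1-p)\rho + p\,\sigma_y\rho\sigma_y$, and $\Lambda_{PF}(\rho) = (1-p)\rho + p\,\sigma_z\rho\sigma_z$, I would first record that any two of the Pauli operators are intertwined by a Clifford unitary: for instance $H\sigma_x H^\dagger = \sigma_z$ with the Hadamard $H$, and $S\sigma_x S^\dagger = \sigma_y$ with $S=\operatorname{diag}(1,i)$. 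Hence for each pair of these channels there is a single-qubit unitary $V$ with $\Lambda_j(\rho) = V\,\Lambda_i(V^\dagger \rho V)\,V^\dagger$ for all $\rho$. Taking tensor products and using linearity then gives, with $\boldsymbol{V}=V\otimes V$,
\begin{equation}
\boldsymbol{\Lambda}_j(\boldsymbol{\rho}) = \boldsymbol{V}\,\boldsymbol{\Lambda}_i\!\left(\boldsymbol{V}^\dagger \boldsymbol{\rho}\,\boldsymbol{V}\right)\boldsymbol{V}^\dagger .
\end{equation}

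Next I would substitute this relation into the definition~(\ref{eq2}) of SE for $\boldsymbol{\Lambda}_j$. Since $\boldsymbol{V}=V\otimes V$ is a local unitary and every entanglement measure $\mathcal{E}$ is invariant under local unitaries, both entanglement terms simplify: $\mathcal{E}\big(\boldsymbol{\Lambda}_j(\boldsymbol{\rho})\big)=\mathcal{E}\big(\boldsymbol{\Lambda}_i(\boldsymbol{V}^\dagger\boldsymbol{\rho}\,\boldsymbol{V})\big)$, and likewise for the term with $\boldsymbol{U}\boldsymbol{\rho}\boldsymbol{U}^\dagger$ inside. Changing variables to $\boldsymbol{\rho}'=\boldsymbol{V}^\dagger\boldsymbol{\rho}\,\boldsymbol{V}$ and $U'=V^\dagger U V$, the first term becomes $\mathcal{E}\big(\boldsymbol{\Lambda}_i(\boldsymbol{U}'\,\boldsymbol{\rho}'\,(\boldsymbol{U}')^{\dagger})\big)$, where the crucial point is that the saving unitary retains its admissible single-party form, $\boldsymbol{V}^\dagger(I_d\otimes U)\boldsymbol{V}=I_d\otimes(V^\dagger U V)=I_d\otimes U'$. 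Because $\boldsymbol{\rho}\mapsto\boldsymbol{\rho}'$ and $U\mapsto U'$ are bijections of $\mathcal{S}(\mathcal{H}\otimes\mathcal{H})$ and $\mathcal{U(H)}$ respectively (and they map pure states to pure states, so the pure-state restriction used in the numerics is respected), the two nested maximizations range over identical domains, yielding $\textnormal{SE}(\boldsymbol{\Lambda}_j)=\textnormal{SE}(\boldsymbol{\Lambda}_i)$. Applying this across the three pairs proves the equality of the SEs.

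For the second assertion, that this common value is strictly positive for $0<p<1$, I would argue separately. These channels are not covariant, so the obstruction responsible for the vanishing SE of the depolarizing channel (the first Observation above) does not apply; non-covariance alone, however, does not force positivity. To obtain strict positivity I would exhibit an explicit witness: fix a one-parameter family of input states (a Bell state conjugated by a suitable local unitary is a natural candidate) together with a concrete saving unitary $U$, and show that the resulting output concurrence strictly exceeds the $U=I_d$ value for every $p\in(0,1)$. At the endpoints the claim is immediate, since $p=0$ gives the identity channel and $p=1$ gives a local unitary channel (e.g. $\rho\mapsto\sigma_x\rho\sigma_x$), both of which leave entanglement unchanged and hence force $\textnormal{SE}=0$.

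The main obstacle I anticipate is this last step rather than the equality. The equality is a clean consequence of the unitary covariance of the construction and requires no optimization. Establishing strict positivity analytically throughout the open interval, by contrast, requires producing a witness family whose saved concurrence can be bounded below by a continuous, strictly positive function of $p$ on $(0,1)$, and confirming that $p=0$ and $p=1$ are the only zeros. This is precisely the content that the numerics certify, and a fully analytic treatment would hinge on finding a sufficiently simple optimal or near-optimal witness so that the concurrence difference can be written in closed form and its sign controlled.
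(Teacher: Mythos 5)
Your proof of the equality part is correct and goes strictly beyond what the paper does: the paper offers no proof of this Observation at all — the identity of the three SE curves is reported purely as a numerical finding, supported by Fig.~\ref{fig1}. Your argument is sound: $\sigma_x,\sigma_y,\sigma_z$ are pairwise conjugate under Clifford unitaries ($H\sigma_xH^\dagger=\sigma_z$, $S\sigma_xS^\dagger=\sigma_y$), hence $\boldsymbol{\Lambda}_j(\cdot)=\boldsymbol{V}\boldsymbol{\Lambda}_i\left(\boldsymbol{V}^\dagger\cdot\,\boldsymbol{V}\right)\boldsymbol{V}^\dagger$ with $\boldsymbol{V}=V\otimes V$ local, and the key step — that $\boldsymbol{V}^\dagger(I_d\otimes U)\boldsymbol{V}=I_d\otimes V^\dagger UV$ preserves the admissible single-party form of the saving unitary, while the changes of variables are bijections that respect the pure-state restriction used in the numerics — checks out. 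This upgrades ``numerically found to be identical'' to an exact identity, using the same style of covariance manipulation the paper deploys in Theorem~\ref{cov-theo} but never applies here.

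Two caveats. First, you use the parametrization $(1-p)\rho+p\,\sigma\rho\sigma$, whereas the paper's channels are $\left(1-\frac{p}{2}\right)\rho+\frac{p}{2}\,\sigma\rho\sigma$. This mismatch is harmless for the equality (the conjugation relation holds at each fixed $p$ as long as all three channels share one convention), but it breaks your endpoint argument: at $p=1$ the paper's channel is $\rho\mapsto\frac{1}{2}\left(\rho+\sigma\rho\sigma\right)$, a completely dephasing channel in the relevant Pauli eigenbasis, \emph{not} a unitary channel. The correct reason $\textnormal{SE}=0$ there is the one the paper records — $\textnormal{EC}=0$ at $p=1$: applied locally to both qubits, the output is diagonal in a product eigenbasis and hence separable, so both terms in Eq.~\eqref{eq2} vanish. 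Second, your strict-positivity claim for $p\in(0,1)$ remains a plan rather than a proof: a witness family is named but not produced, and no lower bound on the concurrence gap is established. On this point you are exactly on par with the paper, which also certifies positivity only numerically, and you candidly acknowledge the gap — but as written, the proposal proves the equality assertion and not the non-vanishing one.
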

In Fig. \ref{fig1}, the behaviour of \textnormal{SEs} for different local channels, viz. amplitude damping\textcolor{blue}{,} bit flip, phase flip, and bit-phase flip channels, are exhibited as functions of the corresponding noise strengths, $p$. It is clear from the figure that SEs of bit flip, phase flip, and bit-phase flip channels are equal at every noise strengths. All though the saved entanglements are quantitatively distinct for different channels, they have qualitative similarities. In particular, they monotonically increase with noise strength up to a cut off value after which they start to decrease with $p$. The value of the SE at the cut-off, depends on the form of the noise.

\begin{figure}[h!]
\includegraphics[scale=1.0]{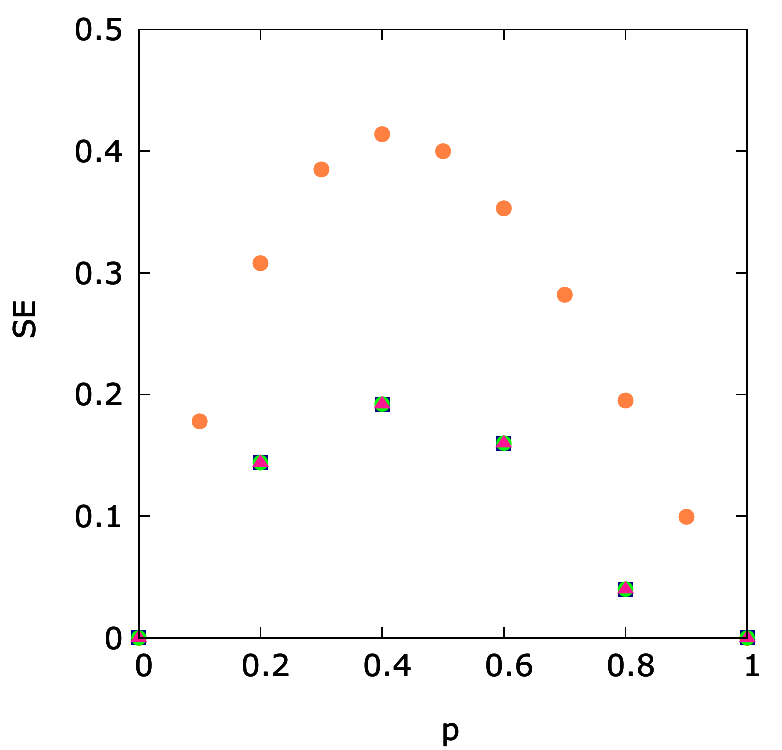}
\caption{Saved entanglements for paradigmatic channels. We exhibit the amounts of entanglement that can be protected i.e., SEs, on the vertical axis, as functions of noise strength, $p$ (horizontal axis). The yellow, blue, green, and pink coloured points represent the values of SE for amplitude damping, bit flip, phase flip, and bit-phase flip channels respectively. The horizontal axis is dimensionless whereas the vertical axis is in ebits.} 
\label{fig1} 
\end{figure}

\begin{definition}
[Entanglement capacity ($\textnormal{EC}$)] 
We define the maximum amount of entanglement that survives after the application of a local noisy channel, $\boldsymbol{\Lambda}$, on an arbitrary state, $\boldsymbol{\rho}$, i.e., 
\begin{equation}
\label{ent-cap-eqn}
    \textnormal{EC} \coloneqq \max_{\boldsymbol{\rho} \in \mathcal{S(H\otimes H)}} \mathcal{E}(\boldsymbol{\Lambda}(\boldsymbol{\rho})),
\end{equation}
as entanglement capacity of the channel, $\boldsymbol{\Lambda}$.
\end{definition}

The value of \textnormal{EC} depends on the strength, $p$, of the channel. It is usually a decreasing function of $p$ because as $p$ increases, the channel becomes more noisy and thus destroys the entanglement more. In case of amplitude damping, bit flip, phase flip, and bit-phase flip channels, EC=0 for $p=1$.

\section{Measures of biasness}
\label{sec4}

When a channel affects individual states differently, we say the channel is biased. In this section, we will discuss various methods of quantifying the biasness. Before going into the details of the measures, let us first state an intuitively satisfactory postulate for a function to be an acceptable measure of biasness. \\

\noindent \textbf{Postulate:}
The value of the measure of biasness should be zero for the depolarizing channel.\\


\noindent We note that the identity channel is a particular case of a depolarizing channel. In the following subsections, we construct three kinds of biasness measures of channels $\Lambda$ or $\boldsymbol{\Lambda}$, all of which satisfy the above mentioned property. 
Let us mention here that while in this paper we will be concerned with exclusively utilizing the biasness measures for characterizing and understanding saved entanglement, we believe that the concept of biasness and the measures thereof will have a wider applicability.

\subsection{Distance from depolarizing channel}
Any single-qubit system's state can be represented by a point on or inside the Bloch sphere. We know that the action of the depolarizing channel, $\Lambda_{DC}$, on the state will contract the length of the distance between the point and the center of the sphere. The amount of contraction depends on the initial length only and not on the direction of the point. Thus we can say that the channel does not have any biasness towards the direction of the point. This fact motivates us to introduce the following measure of biasness: the distance of a channel from the depolarizing channel (DDC). To evaluate the distance between two channels we use the measure defined in the previous section, i.e., Eq. \eqref{dist-eqn}. 
Hence, the biasness, DDC, of a channel, $\Lambda$, can be mathematically expressed as
\begin{equation}
    \textnormal{DDC}(\Lambda) \coloneqq \max_{p} \mathcal{D}(\Lambda_{DC} || \Lambda), \label{eq3}
\end{equation}
where $p$ denotes strength of the depolarizing channel.

Though the measure is introduced based on single-qubit depolarizing channels, it can be generalized to higher dimensions. DDC$(\Lambda)$, from the definition itself, is zero for a depolarizing channel. For the strength $p=0$, the depolarizing channel becomes equivalent to the identity channel [see Eq. \eqref{eq-dc}], and thus DDC is also zero for the identity channel.

In the next section, we will show that the biasness measure, DDC, of different channels, precisely, amplitude damping, bit flip, phase flip, and bit-phase flip, is correlated with the amount of entanglement saved with the help of local unitaries.


\subsection{Channel's dependence on state}
To examine how the transformation of a state, by a channel, depends on the direction of the input state, we can define a channel's dependence on state (CDS). Let $\rho$ and $\rho^\perp$ be two orthogonal pure qubit states. Then, the CDS of a channel, $\Lambda$, is defined as
\begin{equation}
\label{max-min-eqn}
    \textnormal{CDS}(\Lambda) \coloneqq \max_{\rho\in \mathcal{P(H)}} \left[F\left(\rho,\rho^\perp\right)\right] - \min_{\rho\in \mathcal{P(H)}}\left[F\left(\rho,\rho^\perp\right)\right], 
\end{equation}
where $F\left(\rho,\rho^\perp\right)=\text{Tr}\left(\Lambda\left(\rho\right)\rho\right) + \text{Tr}\left(\Lambda\left(\rho^\perp\right)\rho^\perp\right)$. Since for the identity channel, $\Lambda(\rho)\rightarrow\rho$, it is straightforward that for the identity channel, $\textnormal{CDS} = 0$. For the depolarizing channel, $F(\rho,\rho^\perp)=2\left[1-p\left(\frac{d-1}{d}\right)\right]$, which is independent of $\rho$, for a given dimension, which implies CDS$(\Lambda_{DC})$ = 0 for all $p$. Hence, CDS can be a contender for measuring biasness.

\subsection{Incovariance}
Before going into the discussion about the next quantifier of biasness, let us first state a theorem. 
\begin{theorem}
\label{cov-theo}
Saved entanglement is always zero for local covariant channels.
\end{theorem}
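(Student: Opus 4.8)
The plan is to show that the bracketed quantity inside the definition of $\textnormal{SE}$ vanishes identically---for every state $\boldsymbol{\rho}$ and every single-party unitary $U$---so that both nested maximizations return zero and no optimization is actually required. The crucial structural fact is that, when $\Lambda$ is covariant, the local unitary $\boldsymbol{U}=I_d\otimes U$ commutes through the local noise $\boldsymbol{\Lambda}=\Lambda\otimes\Lambda$, in the sense that
\begin{equation}
\boldsymbol{\Lambda}\left(\boldsymbol{U}\boldsymbol{\rho}\boldsymbol{U}^\dagger\right) = \boldsymbol{U}\,\boldsymbol{\Lambda}(\boldsymbol{\rho})\,\boldsymbol{U}^\dagger .
\end{equation}

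First I would establish this commutation identity. Because the first tensor factor of $\boldsymbol{U}$ is the identity, only the second factor is affected: applying $\Lambda\otimes\Lambda$ to $(I_d\otimes U)\boldsymbol{\rho}(I_d\otimes U^\dagger)$ leaves the first-subsystem action of $\Lambda$ unchanged, while on the second subsystem it produces the combination $\Lambda(U\,\cdot\,U^\dagger)$. Invoking the covariance relation $\Lambda(U\sigma U^\dagger)=U\Lambda(\sigma)U^\dagger$ from Eq.~\eqref{cov-eqn} on that factor converts it into $U\Lambda(\cdot)U^\dagger$. Concretely, I would first verify the identity on product states $\boldsymbol{\rho}=\rho_1\otimes\rho_2$, where both sides reduce to $\Lambda(\rho_1)\otimes U\Lambda(\rho_2)U^\dagger$, and then extend it to an arbitrary $\boldsymbol{\rho}$ by the linearity of all the maps involved.

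With the commutation identity in hand, the second step invokes the defining property of any entanglement measure, namely its invariance under local unitaries. Since $\boldsymbol{U}=I_d\otimes U$ is a local unitary, we obtain
\begin{equation}
\mathcal{E}\left(\boldsymbol{\Lambda}\left(\boldsymbol{U}\boldsymbol{\rho}\boldsymbol{U}^\dagger\right)\right) = \mathcal{E}\left(\boldsymbol{U}\,\boldsymbol{\Lambda}(\boldsymbol{\rho})\,\boldsymbol{U}^\dagger\right) = \mathcal{E}\left(\boldsymbol{\Lambda}(\boldsymbol{\rho})\right).
\end{equation}
Hence the difference $\mathcal{E}(\boldsymbol{\Lambda}(\boldsymbol{U}\boldsymbol{\rho}\boldsymbol{U}^\dagger))-\mathcal{E}(\boldsymbol{\Lambda}(\boldsymbol{\rho}))$ is zero for every $\boldsymbol{\rho}$ and every $U$. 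The inner maximization over $U$ therefore yields zero pointwise in $\boldsymbol{\rho}$, and the outer maximization over $\boldsymbol{\rho}$ preserves this, giving $\textnormal{SE}=0$.

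I do not expect a genuine obstacle here; the only subtlety is bookkeeping. One must apply the covariance hypothesis---stated for a single system---to the second factor of the tensor product while the identity channel acts trivially on the first, and then confirm that the product-state verification lifts to general $\boldsymbol{\rho}$ by linearity. Both steps are routine, and because the integrand vanishes identically rather than merely at the optimum, no explicit optimization over states or unitaries ever needs to be carried out.
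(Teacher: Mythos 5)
Your proposal is correct and follows essentially the same route as the paper's proof: commute $\boldsymbol{U}=I_d\otimes U$ past $\boldsymbol{\Lambda}=\Lambda\otimes\Lambda$ via the covariance relation, then invoke local-unitary invariance of the entanglement measure so the bracketed difference vanishes identically and both maximizations return zero. Your extra step of verifying the commutation identity on product states and extending by linearity is a sound way to make explicit what the paper asserts directly.
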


\begin{proof}
 Consider a bipartite system, and the corresponding composite Hilbert space, $\mathcal{H}\otimes\mathcal{H}$, of dimension $d\times d$. Saved entanglement of a local channel, $\boldsymbol{\Lambda}=\Lambda\otimes\Lambda$, is
\begin{equation*}
\textnormal{SE} = \max_{\boldsymbol{\rho} \in S(\mathcal{H}\otimes\mathcal{H})} \left[ \max_{U\in\mathcal{U(H)}} \mathcal{E}\left(\boldsymbol{\Lambda} \left(\boldsymbol{U} \boldsymbol{\rho} \boldsymbol{U}^\dagger\right)\right) - \mathcal{E}\left(\boldsymbol{\Lambda} \left(\boldsymbol{\rho}\right)\right) \right],
\end{equation*}
where $\boldsymbol{U}=I_d\otimes U$ is the local unitary used to save the entanglement. Let us assume that $\Lambda$ is covariant, i.e $\Lambda\left(U\rho U^\dagger\right)=U\Lambda(\rho)U^\dagger$, for all $\rho\in\mathcal{S(H)}$. Then we can replace $\Lambda\otimes\Lambda\left[\left(I_d\otimes U\right)\boldsymbol{\rho}\left( I_d\otimes U^\dagger\right)\right]$ by $I_d\otimes U[\Lambda\otimes\Lambda(\boldsymbol{\rho})] I_d\otimes U^\dagger$. Thus we have
\begin{eqnarray*}
\textnormal{SE}
&=& \max_{\boldsymbol{\rho} \in S(\mathcal{H}\otimes\mathcal{H})} \left[ \max_{U\in\mathcal{U(H)}} \mathcal{E}\left(\boldsymbol{U} \boldsymbol{\Lambda}(\boldsymbol{\rho}) \boldsymbol{U}^\dagger\right) - \mathcal{E}(\boldsymbol{\Lambda} ( \boldsymbol{\rho})) \right] ,\\
&=& \max_{\boldsymbol{\rho} \in \mathcal{S(H\otimes H)}} \left[ \mathcal{E}(\boldsymbol{\Lambda} (\boldsymbol{\rho})) - \mathcal{E}(\boldsymbol{\Lambda} (\boldsymbol{\rho})) \right] ,\\
&=& 0.
\end{eqnarray*}
Here we have used the fact that entanglement remains unchanged under local unitary operations.
\end{proof}

Since local covariant channels can never display a non-zero saved entanglement for any noise strength, local incovariance of channels (IC) can be a reason of exhibition of non-zero saved entanglement, and therefore can be another quantifier of biasness. The mathematical definition of IC can be 
\begin{equation}
    \textnormal{IC}(\boldsymbol{\Lambda}) \coloneqq \max_{{U} \in \mc{{U(H)}}, \boldsymbol{\rho} \in \mathcal{S(H\otimes H)}} ||\boldsymbol{\Lambda}( \boldsymbol{U} \rho  \boldsymbol{U}^\dagger) -   \boldsymbol{U} \boldsymbol{\Lambda} (\rho) \boldsymbol{U}^\dagger||_1, \label{eq5}
\end{equation}
 where $\boldsymbol{U}=I_d\otimes U$. Both identity and any other depolarizing channels are covariant, and thus IC satisfies the desirable postulate for being a measure of biasness. We will analyze IC for different channels in the succeeding section, and in those calculations, we will optimize over the set of pure states only instead of considering the whole set, $\mathcal{S(H\otimes H)}$.

\section{Bounds on saved entanglement}
\label{sec5}
In this part, we will introduce two bounds on the saved entanglement. Let us consider a bipartite state, $\boldsymbol{\rho}$, and let the noise acting on the state be $\boldsymbol{\Lambda}$. Moreover, let us suppose that only the second party applies the unitary operator to protect entanglement. Therefore, the form of the applied local unitary is $\boldsymbol{U}\equiv I_d\otimes U$. Let $U_{\textnormal{max}}$ and $\boldsymbol{\rho}_{\textnormal{\textnormal{max}}}$ be the unitary operator and the bipartite pure state respectively for which the optimization in Eq. \eqref{eq2} can be achieved. Then the saved entanglement of the channel $\boldsymbol{\Lambda}$ is
\begin{equation}
\label{es-eqn}
    \textnormal{SE} \coloneqq    \mathcal{E}\left(\boldsymbol{\Lambda} \left(\boldsymbol{U}_{\textnormal{max}} \boldsymbol{\rho}_{\textnormal{max}} \boldsymbol{U}_{\textnormal{max}}^\dagger\right)\right) - \mathcal{E}(\boldsymbol{\Lambda} (\boldsymbol{\rho}_{\textnormal{max}})),
\end{equation}
where $\boldsymbol{U}_{\textnormal{max}}=I_d\otimes U_{\textnormal{max}}$. As discussed in Sec. \ref{sec3}, the above quantity is greater or equal to zero. Thus we have
\begin{equation}
\label{ana-e1}
    \mathcal{E} \left(\boldsymbol{\Lambda} \left(\boldsymbol{U}_{\textnormal{max}} \boldsymbol{\rho}_{\textnormal{max}} \boldsymbol{U}^{\dagger}_{max}\right)\right) \geq \mathcal{E} (\boldsymbol{\Lambda} (\rho_{\textnormal{max}})).
\end{equation}
The operator $\boldsymbol{\Lambda} (\boldsymbol{U}_{\textnormal{max}} \boldsymbol{\rho}_{\textnormal{max}} \boldsymbol{U}^{\dagger}_{max})$ is a density matrix, and thus can be decomposed in terms of two density matrices in the following way:
\begin{equation}
\label{ana-e2}
   \boldsymbol{\Lambda} \left(\boldsymbol{U}_{\textnormal{max}} \boldsymbol{\rho}_{\textnormal{max}} \boldsymbol{U}^{\dagger}_{max}\right) = p_1 (\boldsymbol{\Lambda}( \boldsymbol{\rho}_{\textnormal{max}})) + p_2 \boldsymbol{\rho}',
\end{equation}
where $p_1$, $p_2\geq 0$ and $p_1 + p_2 =1$. A trivial solution of the above equation is $p_1=0$, $p_2=1$, and $\boldsymbol{\rho}'=\boldsymbol{\Lambda} \left(\boldsymbol{U}_{\textnormal{max}} \boldsymbol{\rho}_{\textnormal{max}} \boldsymbol{U}^{\dagger}_{max}\right)$. But there can be multiple solutions.

Let us now restrict ourselves to the entanglement quantifiers which satisfy the convexity property ~\cite{virmani}. Concurrence~\cite{concurrence,concurrence2}, relative entropy of entanglement~\cite{knight,plenio,relative2}, negativity~\cite{negativity} are some examples of such quantifiers. Using the convexity property and the expression given in Eq. \eqref{ana-e2}, we can write 
\begin{align}
     \mathcal{E}(\boldsymbol{\Lambda}(\boldsymbol{U}_{\textnormal{max}} \boldsymbol{\rho}_{\textnormal{max}} \boldsymbol{U}^{\dagger}_{\textnormal{max}})) = \mathcal{E} (p_1 (\boldsymbol{\Lambda}( \boldsymbol{\rho}_{\textnormal{max}})) + p_2 \boldsymbol{\rho}') \nonumber\\
     \leq p_1 \mathcal{E}(\boldsymbol{\Lambda} (\boldsymbol{\rho}_{\textnormal{max}})) + p_2 \mathcal{E}(\boldsymbol{\rho}'). \label{eq6}
\end{align}
Then an upper bound on the SE of channels can be determined as
\begin{align}
\textnormal{SE} &=  \mathcal{E}\left(\boldsymbol{\Lambda} \left(\boldsymbol{U}_{\textnormal{max}} \boldsymbol{\rho}_{\textnormal{max}} \boldsymbol{U}^\dagger_{\textnormal{max}}\right)\right) - \mathcal{E}(\boldsymbol{\Lambda} (\boldsymbol{\rho}_{\textnormal{max}})) \nonumber\\
& \leq  p_1 \mathcal{E}(\boldsymbol{\Lambda} (\boldsymbol{\rho}_{\textnormal{max}})) + p_2 \mathcal{E}(\boldsymbol{\rho}') - \mathcal{E}(\boldsymbol{\Lambda} (\boldsymbol{\rho}_{\textnormal{max}}))\nonumber\\
&\leq p_2 \mathcal{E}(\boldsymbol{\rho}') - (1-p_1) \mathcal{E}(\boldsymbol{\Lambda} (\boldsymbol{\rho}_{\textnormal{max}})) \nonumber\\ 
&\leq p_2 \lbrack \mathcal{E}(\boldsymbol{\rho}') -  \mathcal{E}(\boldsymbol{\Lambda} (\boldsymbol{\rho}_{\textnormal{max}})) \rbrack \nonumber\\
&\leq \min \left[p_2 \left[ \mathcal{E}(\boldsymbol{\rho}') -  \mathcal{E}(\boldsymbol{\Lambda} (\boldsymbol{\rho}_{\textnormal{max}})) \right]\right] \label{ana-e3}\\  
& \leq \min \left[p_2 \mathcal{E}(\boldsymbol{\rho}')\right].\label{ana-e4}
\end{align}
The minimization is over all possible decompositions expressed in Eq. \eqref{ana-e2}. From inequalities \eqref{ana-e1} and \eqref{eq6}, we see that $\boldsymbol{\rho}'$ can not be separable.

There can be numerous pairs of $\{\boldsymbol{U}_{\textnormal{max}},\boldsymbol{\rho}_{\textnormal{max}}\}$ for which the optimization introduced in the definition of SE is achievable. All of the pairs $\{\boldsymbol{U}_{\textnormal{max}},\boldsymbol{\rho}_{\textnormal{max}}\}$ will satisfy both the inequalities \eqref{ana-e3} and \eqref{ana-e4}. Thus to get the tightest bound we have to minimize the right hand side of those inequalities over the set of pairs $\{\boldsymbol{U}_{\textnormal{max}},\boldsymbol{\rho}_{\textnormal{max}}\}$. Thus we define the following two quantities,
\begin{align}
    \textnormal{EB1} &\coloneqq \min_{\boldsymbol{U}_{\textnormal{max}}, \boldsymbol{\rho}_{\textnormal{max}},p_2} \left[ p_2 \mathcal{E}(\boldsymbol{\rho}') - p_2 \mathcal{E}(\boldsymbol{\Lambda} (\boldsymbol{\rho}_{max})) \right] , \label{ana-fi-1}\\
    \textnormal{EB2} &\coloneqq \min_{\boldsymbol{U}_{\textnormal{max}}, \boldsymbol{\rho}_{\textnormal{max}},p_2}   p_2 \mathcal{E}(\boldsymbol{\rho}'),\label{ana-fi-2} 
\end{align}
which describe bounds on SE.

In case of the identity and other depolarizing channels, the amount of saved entanglement is vanishing. Hence, in those cases, we can choose $\boldsymbol{U}_{\textnormal{max}}$ to be the identity matrix. Therefore, the solution of Eq. \eqref{ana-e2} for which the bounds given in inequality \eqref{ana-e3} and \eqref{ana-e4} are optimal is $\{p_1,p_2\} = \{1,0\}$. Thus we see that EB1 and EB2 are zero for depolarizing channels of all noise strengths.

\section{Biasness and entanglement capacity as an escort to saved entanglement}
\label{sec6}

In this section, we will discuss how biasness and entanglement capacity are correlated with the behaviour of entanglement saved against certain channels, $viz.$ amplitude damping, bit flip, phase flip, and bit-phase flip. To have an overall idea about their relation, in Fig. \ref{fig2}, we present a schematic diagram of the behaviour of the functions. The saved entanglement, for all the considered noisy channels, shows a parabolic nature. It can be seen that the value of SE initially increases with noise strength up to a certain cut-off value. This can be explained through the nature of biasness of the corresponding channel which also is a monotonically increasing function of the same. Since biasness demonstrates the dependence of the channel on the initial state, it indicates that appropriately changing the initial state will alter the effect of the noise, resulting in less entanglement degradation. Thus, more the biasness, more is the possibility of securing entanglement. But after reaching the cut-off value, SE starts decreasing with noise strength. The reason behind this deterioration can be the effect of the intense noise on the initial states, which in turn immensely affects the entanglement of the states, making the states almost separable. That is, though the channel's impact on the states depends on the states themselves, but the outputs have one thing in common: poor entanglement. Thus the amount of saved entanglement, for smaller values of noise strength, follows the behaviour of biasness, whereas for higher values of noise strength, it follows the nature of entanglement capacity. To grasp the characteristics in more detail, we discuss some exemplar noise models in the following sub-sections.

\begin{figure}[h!]
\includegraphics[scale=0.7]{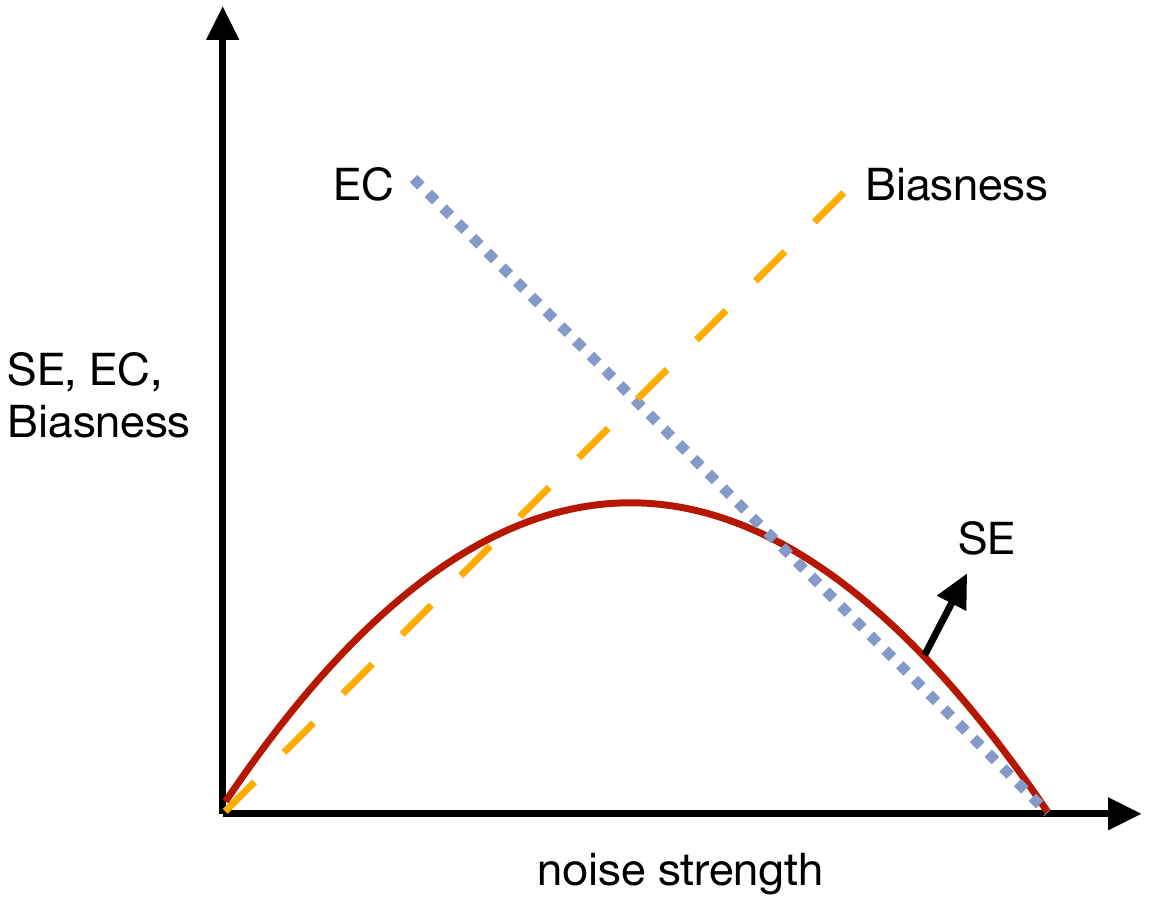}
\caption{Graphical nature of entanglement capacity, biasness, and saved entanglement. A schematic diagram is drawn to represent the behavior of the three distinct functions, viz. EC, SE, and a quantifier of biasness (vertical axis) with respect to noise strength (horizontal axis) of applied local noise. The robustness of entanglement against applied noise decreases with corresponding noise strength, whereas biasness of that noise towards individual input states increases with the strength. The amount of saved entanglement follows the nature of biasness at lower noise strengths and entanglement capacity at higher noise strengths. The quantities are qualitatively represented using yellow dashed line (biasness), blue dotted line (EC), and red solid line (SE). SE and EC are in ebits, while other quantities are dimensionless.} 
\label{fig2} 
\end{figure}

In the following sub-sections, we consider two-qubit systems and apply local noise of the form $\boldsymbol{\Lambda}=\Lambda\otimes\Lambda$, where $\Lambda$ represents a typical noisy channel, for example, amplitude damping channel, bit flip channel, etc. To protect the entanglement, we consider local unitaries of the form $I_2\otimes U$, where $U$ is a single qubit unitary. To determine SE, we optimize over the set of pure states, $\mathcal{P(H\otimes H)}$.

\subsection{Amplitude damping channel}
Let us first consider the amplitude damping channel, $\Lambda_{AD}$. The Kraus operators of the channel is given by
$$K_0 = \left( \begin{array}{cc}
1 & 0 \\
0 & \sqrt{1-p} \end{array} \right), K_1 = \left( \begin{array}{cc}
0 & \sqrt{p} \\
0 & 0 \end{array} \right).$$
Thus the corresponding map can be described as
\begin{equation*}
    \rho\rightarrow \Lambda_{AD}(\rho)=\sum_{i=0}^1 K_i \rho K_i^\dagger.
\end{equation*}
The amount of entanglement that can be saved using local unitaries, i.e., SE, and the biasness quantifiers of the channel, viz. DDC, CDS, and IC, are determined using Eqs. \eqref{eq2}, \eqref{eq3}, \eqref{max-min-eqn}, and \eqref{eq5} respectively. We have used a numerical non-linear optimizer to optimize the functionals. In Fig. \ref{fig3}, we plot SE using brown star points. It is the same curve that was plotted in Fig. \ref{fig1} using yellow circular points. It can be seen that the value of SE increases with $p$ for smaller values of $p$ and then for $p>0.5$ it starts decreasing. We have also plotted the measures of biasness, i.e., DDC, CDS, IC, and entanglement capacity, EC, in the same figure, i.e., Fig. \ref{fig3}. We see that all biasness measures increase with $p$ whereas the entanglement capacity decreases. It is clearly visible that for lower values of $p$, the nature of SE follows the behaviour of biasness and for higher values of $p$, it follows EC. We also plot the bounds EB1 and EB2, expressed in Eqs. \eqref{ana-fi-1} and \eqref{ana-fi-2}, in the same figure. Because of computational limitations, to calculate the bounds, we have not minimized over all $\boldsymbol{U}_{\textnormal{max}}$ and $\boldsymbol{\rho}_{\textnormal{max}}$ but have found only three different pairs of $\{\boldsymbol{U}_{\textnormal{max}}$, $\boldsymbol{\rho}_{\textnormal{max}}\}$ and determined the corresponding EB1 and EB2. From the figure, it is evident that $\textnormal{EB1}$ or $\textnormal{EB2}$ alone can reflect the behaviour of the saved entanglement for all noise strengths.

Interestingly, numerically we have got the same values of the right hand sides of inequalities \eqref{ana-e3} and \eqref{ana-e4}, for each noise strength. Thus we can conclude that $\mathcal{C}(\Lambda (\boldsymbol{\rho}_{\textnormal{max}}))$ is zero for all noise strengths of the channel.


\begin{figure}[h!]
\includegraphics[scale=1.0]{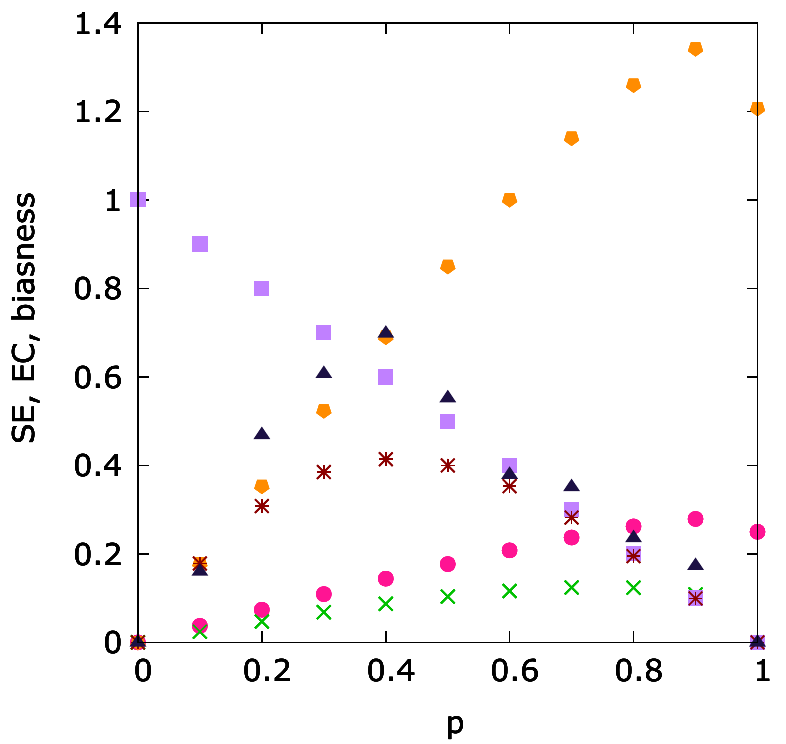}
\caption{Effect of local amplitude damping noise on bipartite pure states and biasness of the channel. We plot SE (brown stars), EC (violet squares), and the biasness measures, i.e., DDC (pink circles), CDS (green crosses), and IC (yellow pentagons) on the vertical axis against the corresponding noise strength, $p$, of applied local amplitude damping channel, represented on the horizontal axis. The bound on saved entanglement, EB1 (black triangles) which is numerically equal to EB2, is also plotted on the same vertical axis. SE, EC, and EB1 are plotted using the dimension of ebits whereas other quantities are dimensionless.}
\label{fig3} 
\end{figure}

\subsection{Bit flip channel}
Next we move to the bit flip noise, $\Lambda_{BF}$, in presence of which, the eigenstates of the $\sigma_z$ matrix, that are $\ket{0}$ and $\ket{1}$, get exchanged with each other, with a finite probability, $\frac{p}{2}$. This transformation can be mathematically expressed as 
$$\rho\rightarrow\Lambda_{BF}(\rho)=\sum_{i=0}^1 K_i \rho K_i^\dagger,$$
where
$$K_0 = \sqrt{1- \frac{p}{2}} \left( \begin{array}{cc}
1 & 0 \\
0 & 1 \end{array} \right), K_1 =  \sqrt{\frac{p}{2}}  \left( \begin{array}{cc}
0 & 1 \\
1 & 0 \end{array} \right).$$

Fig. \ref{fig4} portrays the behavior of the same functionals, as in Fig. \ref{fig3}, for the amplitude damping channel, viz. SE, DDC, CDS, IC, and EC for the bit flip channel, against the noise strength, $p$. It is apparent from the figure that the quantifiers of biasness (DDC, CDS, and IC) and the amount of saved entanglement (SE) behave analogously within the range $0 \leq p \leq \frac{1}{2}$, that is, all of them increase with $p$. After $p = \frac{1}{2}$, the values of the biasnesses continue to increase whereas the corresponding value of SE starts to decrease monotonically. Thus in this range, $\frac{1}{2} \leq p\leq 1$, the nature of SE and EC are alike. Hence we can argue that at first, SE increases because of the presence of biasness in the channel at low noise strength, and then its value starts to reduce at higher values of $p$, because of corresponding low EC of the channel.


\begin{figure}[h!]
\includegraphics[scale=1.0]{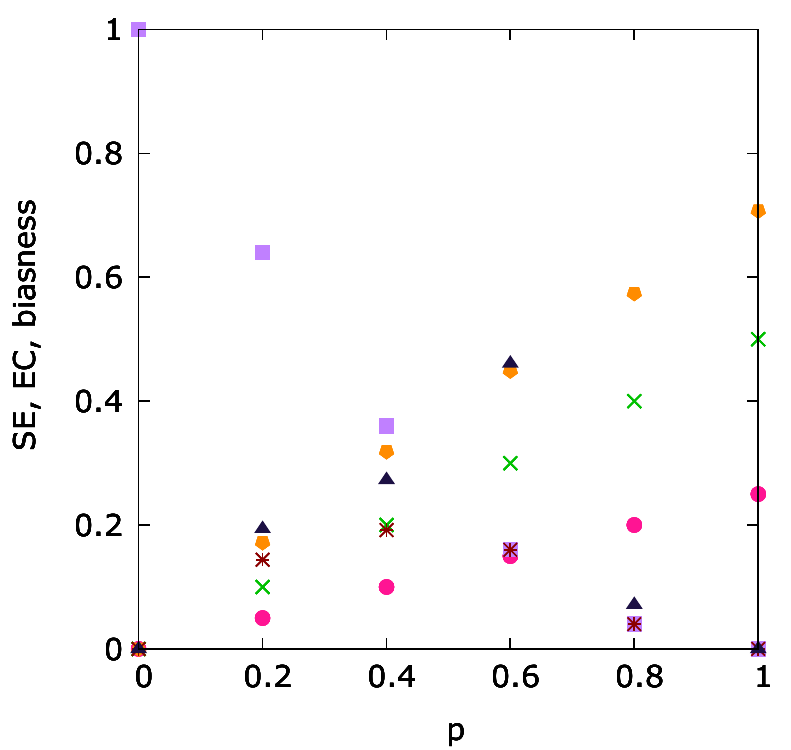}
\caption{Behavior of saved entanglement, entanglement capacity, and measured biasnesses for the bit flip channel. All considerations are the same as in Fig. \ref{fig3} except the fact that here the noise under discussion is bit-flip.} 
\label{fig4} 
\end{figure}

We have calculated EB1 and EB2 for the bit flip channel, only for one pair of $\{\boldsymbol{U}_{\textnormal{max}}$, $\boldsymbol{\rho}_{\textnormal{max}}\}$ and have not minimized over all $\boldsymbol{U}_{\textnormal{max}}$, $\boldsymbol{\rho}_{\textnormal{max}}$. We see that EB1 of the bit flip channel again coincides with EB2 of the same channel for all considered values of the noise strength, $p$. We draw EB1 or EB2 for different values of $p$ in Fig. \ref{fig4}. We see that the bound, EB1 or EB2, can describe the behaviour of SE for all noise strengths.

\subsection{Phase flip channel}
The phase flip noise probabilistically changes the phase of the $\sigma_z$ basis, that is $\ket{0}\rightarrow \ket{0}$ and $\ket{1}\rightarrow -\ket{1}$. The Kraus operators corresponding to the phase flip noise are $K_0=\sqrt{1-\frac{p}{2}} I_2$, $K_1=\sqrt{\frac{p}{2}}\sigma_z$. Thus the transformation can be written as
\begin{equation*}
  \rho\rightarrow\Lambda_{PF} (\rho) = \left(1-\frac{p}{2}\right) \rho + \frac{p}{2}  \sigma_z \rho \sigma_z.
\end{equation*}

It is clear from Fig. \ref{fig1} that SE of the phase flip channel is identical to that of the bit flip channel. We plot the SE again in Fig. \ref{fig5}, to compare it with the biasness measures, DDC, CDS, IC, and the entanglement capacity, EC, which have also been plotted in the same figure, with respect to $p$. We can clearly see from Fig. \ref{fig5} that all the biasness quantifiers discussed in Sec. \ref{sec4} behave similarly as SE of the phase flip channel in lower noisy regions ($p \leq \frac{1}{2}$). But at higher values of the noise strength ($\frac{1}{2} \leq p \leq 1.0$), the saved entanglement seemingly starts to get controlled by entanglement capacity. 

\begin{figure}[h!]
\includegraphics[scale=1.0]{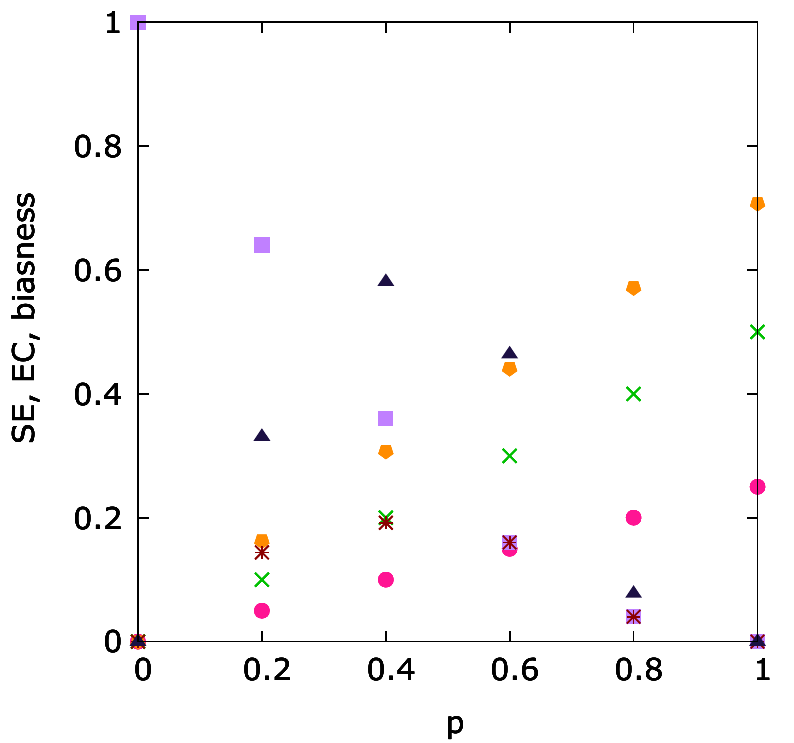}
\caption{Nature of saved entanglement, entanglement capacity, and biasness of phase flip noise. We plot noise strength, $p$, of the phase flip channel along the horizontal axis while the corresponding saved entanglement (SE), entanglement capacity (EC), all biasnesses (DDC, CDS, and IC), and the bound on SE (EB1) along the vertical axis. DDC, CDS, IC, EC, SE, and  EB1 (= EB2) are represented using pink circles, green crosses, yellow pentagons, violet squares, brown stars, and black triangles respectively. All the quantities are dimensionless except ES, EC, and EB1, which have the dimension of ebits.}
\label{fig5} 
\end{figure}

EB1 and EB2 of phase flip channel are evaluated here only for one pair of $\{\boldsymbol{U}_{\textnormal{max}}$, $\boldsymbol{\rho}_{\textnormal{max}}\}$. Again we numerically observe that EB1 and EB2 are equal for each and every considered noise strength, $p$. We plot EB1 in Fig. \ref{fig5}, and realize that the graphical nature of EB1 (or EB2) is similar to that of SE for all noise strengths.

\subsection{Bit-phase flip channel}
Finally, we consider the bit-phase flip noise and determine the amount of saved entanglement, the measures of biasness, and the entanglement capacity. The bit-phase flip noise probabilistically swaps the eigenbasis of the $\sigma_z$ operator as well as adds a phase factor. 
Thus $\ket{0}\rightarrow -i \ket{1}$ and $\ket{1}\rightarrow i\ket{0}$. The transformation can be mathematically described as
\begin{equation*}
  \rho\rightarrow\Lambda_{BPF} (\rho) = \left(1-\frac{p}{2}\right) \rho + \frac{p}{2}  \sigma_y \rho \sigma_y,
  \end{equation*}
where $p$ is its noise strength.
The evaluated results are plotted in Fig. \ref{fig6} as functions of $p$. The nature of SE for the bit-phase flip channel, depicted in Fig. \ref{fig6}, can be explained in the similar way as in the preceding sub-sections, using biasness (DDC, CDS, and IC) in the lower noisy portion, i.e., $0 \leq p \leq \frac{1}{2}$, and using EC in the higher noisy region ($\frac{1}{2} \leq p \leq 1$).

\begin{figure}[h!]
\includegraphics[scale=1.0]{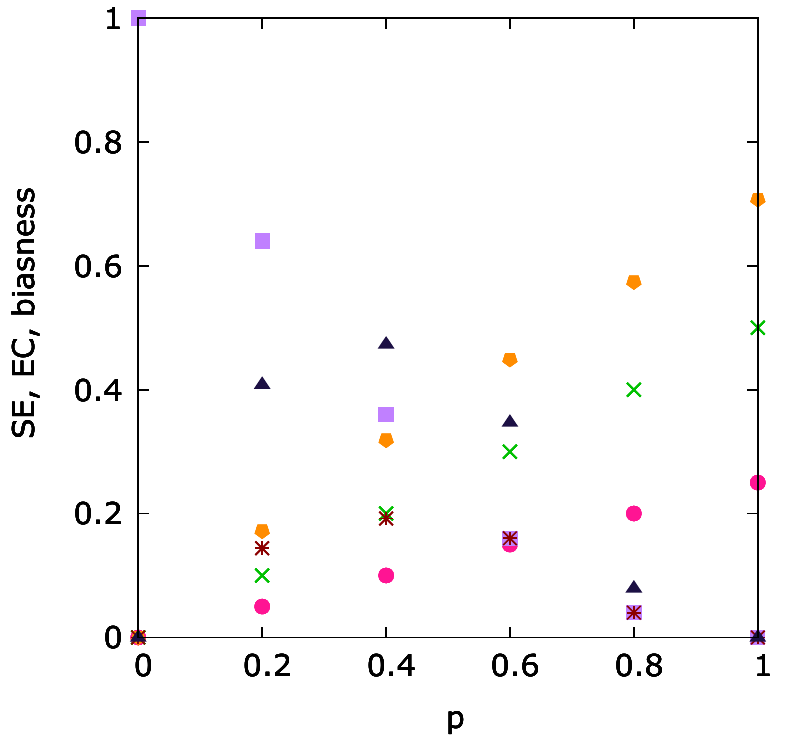}
\caption{Preserved entanglement from bit-phase flip noise, and biasness and entanglement capacity of the channel. All considerations are the same as in Fig. \ref{fig5}, except that the noise in context is bit-phase flip.} 
\label{fig6} 
\end{figure}

Furthermore, we also plot the bound EB1 in Fig. \ref{fig6}, which is again numerically equal to EB2 for all considered noise strengths. To obtain the bounds, we have considered only one set of $\{\boldsymbol{U}_{\textnormal{max}},\boldsymbol{\rho}_{\textnormal{max}}\}$. The nature of entanglement saved of the bit-phase flip channel can be described equivalently, by EB1 or EB2, as for the bit flip and phase flip channels. 

\section{Conclusion}
\label{sec7}
 Though entanglement is an essential resource in many quantum tasks including teleportation, dense coding, and entanglement-based cryptography, it is a fragile characteristic of shared quantum systems. Various unavoidable noise tend to reduce entanglement of shared quantum systems. Preservation of entanglement from such impact of noise is of significant practical interest. It was observed that if certain local unitaries are applied on the entangled state before the system's interaction with noise, the entanglement can be partially protected. The amount of entanglement that can be saved in this way depends on the nature of the noise, and as an extreme example, the depolarizing channel's effect can not be bypassed or diminished by utilizing local unitaries. In this work, we have tried to investigate the reason behind the partial protection provided by local unitaries. 

We explored the phenomenon through two physical characterstics of quantum channels, viz. biasness, which we argue as being able to explain the nature of saved entanglement when the strength of the applied noise is low, and entanglement capacity, which we argue as explaining the behaviour of saved entanglement for higher strengths of noise. We have also obtained two upper bounds on the saved entanglement, which we observed to represent the characteristics of the saved entanglement in the full range of noise strength.

\section*{Acknowledgment}
We acknowledge partial support from the Department of Science and Technology, Government of India through the QuEST grant (grant number DST/ICPS/QUST/Theme3/2019/120). The research of KS was supported in part by the INFOSYS scholarship.

\section*{appendix}
\label{sec8}
\begin{lemma}
Depolarizing channel is a covariant channel.
\end{lemma}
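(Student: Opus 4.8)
The plan is to verify the covariance relation from Definition~1 directly by plugging the explicit form of the depolarizing channel into both sides of Eq.~\eqref{cov-eqn}. Since the depolarizing channel is given in closed form by Eq.~\eqref{eq-dc}, namely $\Lambda_{DC}(\rho) = (1-p)\rho + \frac{p}{d} I_d$, no appeal to the Kraus representation is needed; the entire argument reduces to an elementary computation using linearity and the unitarity of $U$.

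First I would compute the left-hand side by substituting $U\rho U^\dagger$ in place of $\rho$ in Eq.~\eqref{eq-dc}, obtaining
\begin{equation*}
\Lambda_{DC}\!\left(U\rho U^\dagger\right) = (1-p)\, U\rho U^\dagger + \frac{p}{d}\, I_d .
\end{equation*}
Next I would expand the right-hand side, $U\,\Lambda_{DC}(\rho)\,U^\dagger$, by conjugating Eq.~\eqref{eq-dc} and distributing over the sum,
\begin{equation*}
U\,\Lambda_{DC}(\rho)\,U^\dagger = (1-p)\, U\rho U^\dagger + \frac{p}{d}\, U I_d U^\dagger .
\end{equation*}

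The crux of the argument is then the single observation that $U I_d U^\dagger = U U^\dagger = I_d$, which holds precisely because $U$ is unitary. With this identity, the two displayed expressions coincide term by term, so $\Lambda_{DC}(U\rho U^\dagger) = U\,\Lambda_{DC}(\rho)\,U^\dagger$ for every $\rho \in \mathcal{S(H)}$ and every $U \in \mathcal{U(H)}$, which is exactly the covariance condition. Because the computation makes no assumption on $d$, this establishes covariance in arbitrary dimension and thereby generalizes the geometric Bloch-sphere argument given in the main text for the single-qubit case.

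Honestly, I do not expect any genuine obstacle here: the statement is immediate once one notes the rotational invariance of the identity operator. The only point deserving mild care is to treat $p$ as fixed and to invoke linearity of conjugation $X \mapsto U X U^\dagger$ so that it passes through the convex combination cleanly; beyond that, the proof is a one-line consequence of $U U^\dagger = I_d$.
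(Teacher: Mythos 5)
Your proof is correct and is essentially the same argument as the paper's appendix proof: both substitute the closed form $\Lambda_{DC}(\rho)=(1-p)\rho+\frac{p}{d}I_d$ and use $\frac{p}{d}I_d=\frac{p}{d}UU^\dagger$ to pull the unitary conjugation outside. If anything, your version is stated more cleanly, since the paper's displayed computation spuriously wraps each line in an entanglement measure $\mathcal{E}[\cdot]$ while actually establishing the operator identity $\Lambda_{DC}(U\rho U^\dagger)=U\Lambda_{DC}(\rho)U^\dagger$ that you prove directly.
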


\begin{proof}
For the depolarizing channel expressed in Eq. \eqref{eq-dc},
\begin{align*}
    \mathcal{E}\left[\Lambda_{DC}\left(U \rho U^{\dagger}\right)\right] &= \mathcal{E}\left[\frac{p}{d}I_d + (1-p) U\rho U^{\dagger}\right],\\
    &= \mathcal{E}\left[\frac{p}{d} U U^{\dagger} + (1-p) U \rho U^{\dagger}\right],\\
    &= \mathcal{E}\left[U (\frac{p}{d} I_d + (1-p)\rho)U^{\dagger}\right],\\
    &= \mathcal{E}\left[U (\Lambda_{DC} (\rho)) U^{\dagger}\right],
\end{align*}
\end{proof}

\bibliography{bias_qm}

\end{document}